\documentclass[a4paper]{article}
\usepackage{amscd}
\usepackage{fancyhdr,latexsym,amsmath,amsfonts,amssymb,amsbsy,amsthm,url}%
\usepackage{natbib}
\usepackage[hscale=0.7,centering]{geometry}
\usepackage{graphicx,epsfig, xy}
\usepackage{hyperref}
\usepackage{lscape,fancybox}
\usepackage{color}
\numberwithin{equation}{section}



\numberwithin{equation}{section}

\newcommand{\BE}{{\mathbb{E}}}

\newtheorem{theorem}{Theorem}

\title{\textbf{Measuring spatial association and testing spatial independence based on short time course data
}}

\author{Divya Kappara}
%
\date{}

\begin{document}
	\def\shortauthors{ABC}
	\author{
		\hspace{0.01\textwidth}
		\parbox[t]{0.25\textwidth}{{Divya Kappara}
			\\ {\small University of Hyderabad\\
					and \\
					IIT-Bombay\\
					kapparadivya@gmail.com\\
				\\ }}
		\hspace{0.01\textwidth}
		\parbox[t]{0.3\textwidth}{{Arup Bose}
			\thanks{Research  supported by J.C.~Bose National Fellowship, Dept.~of Science and Technology, Govt.~of India.}
			\\ {\small Stat-Math Unit\\
			Indian Statistical Institute\\
				bosearu@gmail.com\\
				\\ }}
		\hspace{0.01\textwidth}
		\parbox[t]{0.40\textwidth}{{Madhuchhanda Bhattacharjee}
				\\ {\small University of Hyderabad\\
					and \\
					University of Manchester\\
				chhanda.bhatta@gmail.com\\
					\\ }}
		}

	\author{
		\hspace{0.01\textwidth}
		\parbox[t]{0.27\textwidth}{{Divya Kappara}
			\\ {\small University of Hyderabad\\
					and \\
					IIT-Bombay\\
					kapparadivya@gmail.com\\
				\\ }}
		\hspace{0.01\textwidth}
		\parbox[t]{0.25\textwidth}{{Arup Bose}
			\thanks{Research  supported by J.C.~Bose National Fellowship, Dept.~of Science and Technology, Govt.~of India.}
			\\ {\small Stat-Math Unit\\
			Indian Statistical Institute\\
				bosearu@gmail.com\\
				\\ }}
		\hspace{0.01\textwidth}
		\parbox[t]{0.40\textwidth}{{Madhuchhanda Bhattacharjee}
				\\ {\small University of Hyderabad\\
					and \\
					University of Manchester\\
				chhanda.bhatta@gmail.com\\
					\\ }}
		}


	\maketitle

	\begin{abstract} 
	\noindent Spatial association measures for univariate static spatial data are widely used. When the data is in the form of a collection of spatial vectors with the same temporal domain of interest, we construct a measure of similarity between the regions' series, using Bergsma's correlation coefficient $\rho$. Due to the special properties of $\rho$, unlike other spatial association measures which test for \textit{spatial randomness}, our statistic can account for \textit{spatial pairwise independence}. We have derived the asymptotic behaviour of our statistic under null (independence of the regions) and alternate cases (the regions are dependent). We explore the alternate scenario of spatial dependence further, using simulations for the SAR and SMA dependence models. Finally, we provide  application to modelling and testing for the presence of spatial association in COVID-19 incidence data, by using our statistic on the residuals obtained after model fitting.
		
	\end{abstract}
	\noindent \textbf{Keywords:} Bergsma's correlation, spatial association measure, $U$-statistic, spatial autoregressive model, spatial moving average model.\vskip5pt
	
	\noindent \textbf{AMS Subject classification}: 
	Primary 62H20; 
	Secondary  62F12, 
             92D30, 
             62H11, 
	           62P10, 
			       62M30.

\section{Introduction}\label{sec:intro}
	\textit{Spatial association} refers to the study of relatedness over space. Long before the term \textit{spatial autocorrelation} was formally introduced, 	\cite{ravenstein1885laws}, \cite{zipf1946p}, and \cite{ullman1956role} have addressed the idea of distance decay effect over space and human spatial interaction. 
The first law of geography by \cite{tobler1970computer}, supporting the fact that georeferenced observations are not generally independent of one another states that ``\textit{Everything is related to everything else, but near things are more related than distant things}''. 
	
	The primary reason behind the development of spatial association measures is the realization that it is unrealistic to assume stationarity over space.	Many prominent researchers have developed such measures. 	One of the most widely used measures is Moran's $I$ (\cite{moran1950notes}), which is based on a global covariance representation. \cite{cliff1972testing} generalized Moran's work by demonstrating how one can test residuals of regression analysis for \textit{spatial randomness}. They have worked out moments of Moran's $I$ and its distributional properties under varying sampling assumptions. Geary's $c$ (\cite{geary1954contiguity}) is another measure that has a global differences representation. 
	
	\cite{anselin1995local} introduced local measures of spatial association (LISA) to detect variations across space in the presence of spatial heterogeneity. \cite{getis2007reflections, getis2008history} provide excellent reviews on the study of spatial autocorrelation. 	\cite{getis2010analysis} introduced a family of statistics $G$ to detect the local clusters of dependence. Using estimates of spatial autocorrelation coefficients in regression models, is a widely used technique in spatial econometrics, see \cite{anselin1988spatial}. 
	
Suppose data is collected on a single variable at different time points across $R$ spatial units, yielding $R$ concurrent spatial time series.
Many researchers have attempted to provide measures of spatio-temporal dependence. \cite{martin1975identification}, \cite{shao2013analysis}  and \cite{gao2019measuring} etc., have extended univariate Moran's $I$ to the Spatio-temporal Moran's $I$. \cite{dube2013spatio} have formulated the idea of constructing a spatio-temporal weights matrix by joining two independently constructed spatial and temporal proximity matrices.
	
	In Section \ref{sec:spatialassoc} we discuss in detail spatial association measures and their components. Any spatial measure of association requires a measure of similarity between the recorded observations of a random variable. Euclidean distance, Frechet distance, Pearson's correlation, are some of the similarity measures that are often used. \cite{bergsma2006new} introduced a correlation coefficient $\rho$, as a measure of independence such that, $\rho=0$  if and only if $X$ and $Y$ are independent. 	In Section \ref{sec:bergsma1} we give in brief the required background on this measure.
	
	Along with a similarity measure, we also need a spatial proximity matrix which reflects the strength of association based on the proximity between the regions. In Section \ref{sec:sb}, we use estimates of $\rho$,  and different choices for spatial proximities, to define a class of spatial association measures $S_B$. We have explored its asymptotic behavior, as the length of time increases,  assuming that the observations over time are i.i.d.~while at any fixed time, they are either pairwise independent or are dependent across the regions. We do not explore the asymptotic distributions under any possible additional spatio-temporal dependence. 
	
	In Section \ref{sec:simulation} we present the results of our simulation study for $S_B$ under both, the fully independent case, as well as in the case of two dependent models of spatial autoregressive and spatial moving average. In Section \ref{sb:covid} we present an application of the proposed measure of spatial association to COVID-19 incidence data of the Indian state of Kerala. This is done by fitting an appropriate model, and then using the $S_B$ statistic computed from the appropriate residuals.

\section{Methods and Materials}\label{methods}

\subsection{Spatial Association Statistics}\label{sec:spatialassoc}
	Suppose we  have a study area which is divided into $R$ (geographical) units over which a variable is observed. Spatial association refers to the relationship between the values of the variable with respect to the proximity of the regions. The two building blocks of a spatial association statistic are two matrices, $W$ and $Y$.

	The \textbf{spatial proximity matrix} $W=((w_{ij}))_{1\leq i, j \leq R}$ is based on the proximity (typically the geographical proximity) of the units, and provides the spatial component. Larger weights are assigned to the pairs of regions that are spatially ``more related". \cite{getis2010analysis} suggests three different types of $W$ matrices: 
	(i) a matrix based on some theoretical notion of spatial association, such as a distance decline function, 
	(ii) a matrix based on a geometric indicator of spatial nearness, such as the representation of contiguous
	spatial units, and 
	(iii) a matrix which uses a descriptive expression of the spatial association that already exists within the data set.
	It is always assumed that $w_{ii}=0$ for all $i$. 
	The most commonly used $W$ is the \textit{adjacency matrix} where, 
	\[
	w_{ij}= 
	\begin{cases}
		1& \text{if regions $i$ and $j$ are adjacent localities, } \\
		0              & \text{otherwise.}
	\end{cases}
	\]
	Another popular choice is to take $d_{ij}$ as the Euclidean distance, or some other distance, between the centroids of regions $i$ and $j$, and then $w_{ij}=d_{ij}^{-1}$.
	
	The matrix $W$ is often \textit{row-standardized}, so that each row sum in the matrix is equal to one. We would be working with $W$ which are row-standardized. This will turn out to be a crucial point in our data analysis later.
	
	The second component is a \textbf{similarity matrix}. Suppose that we have a series of observations ${X_i}^T =\{x_{im},m=1,\ldots, T\}$ at each region $i=1,2,\ldots,R$ at $T$ time points. The similarity matrix $Y$ is defined as $Y:=((sim_{ij}))$, where $sim_{ij}$ is some measure of similarity between $X_i^T$ and $X_j^T$. For example, it could be the sample covariance between $X_i^T$ and $X_j^T$. Note that these values are dependent only on any possible non-spatial relationship across the space. 
	
	Any choice $W$ and $Y$ yields a global \textbf{spatial autocorrelation index} 
	\begin{equation} 
		\dfrac{\displaystyle{\sum_{i=1}^{R}\sum_{j=1}^{R}w_{ij}sim_{ij}}}{\displaystyle{\sum_{i,j=1}^{R}w_{ij}}}.
		\label{basicf}
	\end{equation}
	When the matrices $Y$ and $W$ have similar structures, that is, they have high or low values together for the same pair of units $i$ and $j$, we can say that there is a high degree of spatial association.
	
	When the observations $X_i^T$ are univariate with single observation per spatial unit, then the well-known Moran's $I$ (\cite{moran1948interpretation}) uses the covariance as a similarity measure. On the other hand, Gearcy's $c$ (\cite{geary1954contiguity} uses squared differences as the similarity.

\subsection{A measure of independence}\label{sec:bergsma1} \cite{ bergsma2006new} introduced a correlation coefficient $\rho$, with a corresponding covariance $\kappa$ as a measure of independence. It is known that $\rho(X,Y)=0$ if and only of $X$ and $Y$ are independent. 
Here we briefly give the definitions of this correlation $\rho$ and its estimators. For convenience, we use the same notation as that of \cite{kappara2022assessing}. 

Let $Z_1, Z_2$ be i.i.d.~with distribution $F$ which has finite mean. Define 
\begin{eqnarray}g_F(z)&:=& \BE_F[ |z-Z|], \label{eq:g_F}\\
	g(F)&:=&\BE_{F}[|Z_1-Z_2|]=\BE_{F}[g_F(Z)], \label{eq:gF}\\
	h_F(z_1, z_2)&=&-\dfrac{1}{2} \big[|z_1-z_2|-g_F(z_1)-g_F(z_2)+g(F)\big].\label{eq:hdef}
\end{eqnarray}
Note that 
\begin{equation}\label{eq:iidcase}
	\mathbb{E}h_F(Z_1,Z_2)=0\ \ \text{whenever}\ \ Z_1, Z_2\ \ \text{are i.i.d.} \ \ F.
\end{equation}
Now, let $F_1$ and $F_2$ be the marginal distributions of a bivariate random variable $(X,Y)$. Let $(X_1,Y_1)$ and $(X_2,Y_2)$ be i.i.d~copies of $(X,Y)$. Then Bergsma's covariance and correlation coefficient  between $X$ and $Y$ are defined respectively by
\begin{equation}
\kappa=	\kappa(X,Y) :=\mathbb{E}[h_{F_{1}}(X_1,X_2)h_{F_{2}}(Y_1,Y_2)], \ \ \text{and}\ \ \rho=\rho(X,Y) :=\frac{\kappa(X,Y)}{\sqrt{\kappa(X,X)\kappa(Y,Y)}}. 
	\label{kappa}
\end{equation}

\subsection{Estimates of \texorpdfstring{$\kappa$}{}} Suppose we have $n$ observations $(x_i,y_i)$, $1\leq i \leq n$ from a bivariate distribution $F_{12}$. Bergsma has given two estimators of $\kappa$ namely $\tilde{\kappa}$, and $\hat{\kappa}$ respectively based on some $U$ and $V$-statistics with estimated kernels, and studied their distributional properties under independence. In \cite{kappara2022assessing} a third estimate $\kappa^{*}$ was introduced, and the properties of all three covariance estimates $\tilde{\kappa}$,  $\hat \kappa$ $\kappa^{*}$ was discussed under dependence as well as independence. In particular, their asymptotic normality was established under dependence.  

Simulation results in \cite{kappara2022assessing} show that the $V$-statistic based estimate $\hat{\kappa}$ has an upward bias. Performance-wise, $\tilde{\kappa}$ had an edge in computation time over the other two estimators. Therefore, we shall use  the $U$-statistic based estimate $\tilde{\kappa}$ and the corresponding correlation $\tilde{\rho}$  in our subsequent developments. 

Note that the kernel function $h_F$ defined in Equation \eqref{eq:hdef}, depends on the unknown distribution function $F$. 
Its sample analogue is given by  $\tilde{h}_{\hat{F}}$,
\begin{equation}
	\tilde{h}_{\hat{F}}(z_i,z_j)=-\frac{1}{2}\bigg[|z_i-z_j|-\frac{n}{n-1}(\frac{1}{n}\sum_{k=1}^{n}|z_i-z_k|+\frac{1}{n}\sum_{k=1}^{n}|z_k-z_j|-\frac{1}{n^2}\sum_{k,l=1}^{n}|z_k-z_l|)\bigg], \label{eq:hftilde}
\end{equation}
 where $\hat{F}$ is the sample distribution function. The $U$-statistic type estimator of $\kappa$ is defined as,
\begin{equation}
	\tilde{\kappa}=\tilde{\kappa} (x, y):=\binom{n}{2}^{-1}\sum_{1\leq i < j \leq n}\tilde{h}_{\hat{F}_1}(X_i, X_j)\tilde{h}_{\hat{F}_2}(Y_i, Y_j), \ \ \text{and}\ \ \tilde{\rho}=\tilde{\rho}(x,y):=\frac{\tilde{\kappa}(x,y)}{\sqrt{\tilde{\kappa}(x,x)\tilde{\kappa}(y,y)}},\label{rhou}
\end{equation}
where $\hat{F}_1$ and $\hat{F}_2$ are the sample distribution functions of the $\{X_i\}$ and $\{Y_i\}$ respectively. 

\cite{kappara2022assessing} proved that if the pairs $\{(X_i, Y_i)\}$ are i.i.d.~with $\BE_{F_{12}}[X_1^2Y_1^2] < \infty$, then  as $n \to \infty$, 	$n^{1/2}\big(\tilde{\kappa}-\kappa\big)$ is asymptotically normal with mean $0$ and some variance $\delta_1$. 
The crucial step in the proof of the above result which we shall need later, is that the leading term of $\tilde \kappa$ is a $U$-statistics, whose first projection is say $H_1$, and 
\begin{equation}\label{kappahathdecom}n^{1/2}\big(\tilde{\kappa}-\kappa\big)=\frac{1}{2}n^{-1/2}\sum_{i=1}^{n}H_1(X_i, Y_i)+R_n, \ \text{where}\ \  
R_n \xrightarrow{P} 0.\end{equation}
	To describe $H_1$ and $\delta_1$, let
 	\begin{eqnarray*} g_{F_{12}}(x,y)&:=&\BE_{F_{12}} \big[|x-X_1| \  |y-Y_1|\big], \\
		g({F_{12}}) &:=& \BE_{F_{12}} \big[ g_{F_{12}} (X_2, Y_2)\big]= \BE_{F_{12}} \big[|X_2-X_1| \  |Y_2-Y_1|\big].
	\end{eqnarray*}
$$\mu_{1}:=g(F_1), \ \mu_{2}:=g(F_2), \ \mu_{12}:=g (F_{12}), \ \text{and} \ \ \mu_{3}:=\BE_{F_{12}}\big[g_{F_{1}}(X)g_{F_{2}}(Y)\big].$$
	Let $(X_1, Y_1),  (X_2, Y_2), (X_3, Y_3)$ be i.i.d. Then $H_1(\cdot, \cdot)$ and $\delta_1$ are given by 
	\begin{eqnarray*}
		H_1(x, y)&=&\big[g_{F_{12}}(x, y)-\mu_{12}+\mu_{1}(g_{F_{2}}(y)-\mu_2)+\mu_{2}(g_{F_{1}}(x)-\mu_1)-(g_{F_{1}}(x)g_{F_{2}}(y))\\
		&& \ - \BE_{F_{12}}[|X_2-x|\ |Y_2-Y_3|  |]-\BE_{F_{12}}[|X_2-X_3|\ |Y_2-y| ]+3\mu_{3}\big],\\
	\delta_1 &=&\dfrac{1}{4}\mathbb{V} (H_1(X_1, Y_1)).\end{eqnarray*}
	

When $\kappa=0$, that is $X$ and $Y$ are independent, the first projection $H_1$ in \eqref{kappahathdecom} is zero. In this case, \cite{bergsma2006new} obtained the asymptotic distributions of $\tilde{\kappa}$ and $\hat{\kappa}$.
\cite{kappara2022assessing}, gave a detailed proof of the above result, along with the asymptotic distribution of $\kappa^*$. Again, the crucial step in that proof is to identify the leading term of the estimators, in terms of the second projections of the relevant $U$-statistics.
Indeed, \begin{equation}
	\tilde{\kappa}=\binom{n}{2}^{-1}\sum_{1\leq i<j \leq n}h_{F_1}(X_i, X_j)h_{F_2}(Y_i, Y_j)+{R}_n, \ \text{where}\ \ n{R}_n\xrightarrow{P}0. \label{hdecom2}
\end{equation}
Later we shall use this weak expansion to find the asymptotic distribution of our spatial association measure when the regions are assumed to be pairwise independent.

\section{The \texorpdfstring{$S_B$}{} measure of association}\label{sec:sb} Now suppose there are $R$ regions and let $X_i$ denote a variable corresponding to the region $i=1,\ldots, R$. Then we can use the similarity measure $\rho$, to define a global spatial association measure (\textbf{Spatial Bergsma}) as 
\begin{equation}
	S_{B}:=\dfrac{\displaystyle{\sum_{i, j=1}^{R}w_{ij}\rho (X_i, X_j)}}{S_0},\  \text{where}\ \ S_0=\sum_{i, j=1}^{R}w_{ij}.\label{sbp}
\end{equation} 
Note that 
\begin{equation}\label{eq:rowstan}
S_0=R, \ \text{whenever}\ \ W\ \ \text{is row-standardized}. 
\end{equation}
 Further,  $\rho (X_i, X_j)=\rho (X_j, X_i)$ and $w_{ii}=0$ for all $i,j=1,\ldots,R$. Therefore,
\begin{eqnarray}
	S_{B} = \dfrac{\displaystyle{\sum_{1\leq i <j \leq R}(w_{ij}+w_{ji})\rho(X_i, X_j)}}{S_0}.\label{SBp1}
\end{eqnarray}
Now, suppose that we have observations ${X_i}^T=\{x_{im},m=1,\ldots, T\},$ such that the vectors $(x_{im}, 1\leq i \leq R)$, are i.i.d.~for $m=1,\cdots, T$, with marginal distributions $F_i$, $1\leq i \leq R$. Let $\tilde{\kappa}^{(ij)}=\tilde{\kappa}(X_{i}^{T},X_{j}^{T})$, and $\tilde{\rho}^{(ij)}$ denote the $\tilde{\kappa}$ and $\tilde{\rho}$, calculated for the pair of variables $(X_i^{T},X_j^{T})$ using Equations \eqref{eq:hftilde} and \eqref{rhou}. 

We define a $U$-statistic based estimate of $S_{B}$ as
\begin{equation}
	\tilde{S}_{B}:=\dfrac{\displaystyle{\sum_{1\leq i <j \leq R}(w_{ij}+w_{ji})\tilde{\rho}^{(ij)}}}{S_0},\label{sb}
\end{equation} 
where,
\begin{equation}
	\tilde{\rho}^{(ij)}=\tilde{\rho}(X_{i}^{T},X_{j}^{T})=\frac{\tilde{\kappa}(X_{i}^{T},X_{j}^{T})}{\sqrt{\tilde{\kappa}(X_{i}^{T},X_{i}^{T})\tilde{\kappa}(X_{j}^{T},X_{j}^{T})}}. \label{bij}
\end{equation}
Further, with a given spatial proximity $((w_{ij}))$, we can compute this \textit{spatial Bergsma statistic} $\tilde{S}_{B}$ using Equation \eqref{sb}. The {\tt R} code for the computation of $\tilde{S}_{B}$ is given in the Appendix A.

\subsection{Asymptotic distribution of \texorpdfstring{$\tilde{S}_B$}{}}
The finite sample distributional properties of global indices are challenging to obtain in general. Usually additional assumption of normality is made for this purpose, or a randomization approach is used. When $T$ is large,  asymptotic distributions are also used. \cite{tiefelsdorf1995exact} showed that the accuracy of the asymptotic distribution of global indices depend heavily on the spatial proximity matrix $W$ and the number of regions $R$. 

Similar to the other global indices, it is impractical to obtain the exact distribution of $S_B$ for fixed $T$. Therefore we focus on  asymptotic results. Under appropriate assumptions, the asymptotic normality of $\tilde{S}_B$, as $T\to \infty$ can be easily established  as follows.

\noindent Writing Equation \eqref{sb} explicitly we get
\begin{equation}
	\tilde{S}_{B} = \frac{1}{S_0}\sum_{1\leq i <j \leq R}(w_{ij}+w_{ji})\frac{\tilde{\kappa}(X_i^{T},X_j^{T})}{\sqrt{\tilde{\kappa}(X_i^{T},X_i^{T})\tilde{\kappa}(X_j^{T},X_j^{T})}}.	\label{sbdef}
\end{equation}
Define an $\binom{R}{2} \times 1$ column vector $\mathbf{C}$, and a $1\times \binom{R}{2}$  row vector $\mathbf{d}$ as
\begin{eqnarray*}\mathbf{C}
	&:=&\big(\tilde{\kappa}^{(ij)},1\leq i<j\leq R\big)^{\top},\\
	\mathbf{d}
	&:=& \big(\frac{(w_{ij}+w_{ji})}{R\sqrt{\tilde{\kappa}(X_i^{T},X_i^{T})\tilde{\kappa}(X_j^{T},X_j^{T})}},1\leq i<j\leq R\big).\nonumber	\end{eqnarray*}
Then we can rewrite Equation \eqref{sbdef} as 	$$\tilde{S}_B=\mathbf{d}\mathbf{C}.$$
Define the centered and scaled column vector, 
$$\tilde{\mathbf{C}}:=\big(\sqrt{T}\big(\tilde{\kappa}^{(ij)}-{\kappa}^{(ij)}\big),1\leq i<j\leq R\big)^{\top}.$$
Define $\tilde{S}_B^{\prime}$ after required centering and scaling of $\tilde{S}_B$ as,
\begin{equation*}
	\tilde{S}^{'}_B:= \mathbf{d}\tilde{\mathbf{C}}.
\end{equation*}
Note that for every $i$, ${X_i}^T, T \geq 1$ are i.i.d. By an application of the SLLN it follows that, 
\begin{eqnarray} 
	\tilde{\kappa}(X_i^{T},X_i^{T}) & \xrightarrow {a.s} & \mathbb{E}h_{F_{i}}(X_i,X_i)^2 \nonumber\\
	&=&\sum_{k=0}^{\infty}\big(\lambda_k^{(i)}\big)^2. \label{denm}
\end{eqnarray}
Hence
\begin{equation} 
	\mathbf{d} \xrightarrow {a. s.} \big(\frac{(w_{ij}+w_{ji})}{S_0\sqrt{\displaystyle{\sum_{k=0}^{\infty}\big(\lambda_k^{(i)}\big)^2}\displaystyle{\sum_{k=0}^{\infty}\big(\lambda_k^{(j)}\big)^2}}},1\leq i<j\leq R\big)=\mathbf{a}\ \ \text{(say)}.
	\label{limitdenom}
\end{equation}
Recall that $S_0=R$ if $W$ is row-standardized. 

We are now ready to state the asymptotic normality result for the $S_B$ statistic. We use the notation $H_{1}^{(ij)}$ to denote the kernel function $H_1$ corresponding to the regions $(i, j)$. 
\begin{theorem}\label{thm:sbstatnormal}Suppose that the vectors
$\{(x_{im}, 1\leq i \leq R)\}, m=1,\cdots, T$ are i.i.d.~such that $\mathbb{V}({H_1}^{(ij)}(x_{im},x_{jm}))<\infty$, for all 
pairs $(i,j)$. Then as $T\rightarrow\infty$, 
	\begin{equation*}
		\tilde{S}^{\prime}_B= \mathbf{d}\tilde{\mathbf{C}} \xrightarrow{D} N(0,\mathbf{a^{\top}}\Sigma \mathbf{a}).
	\end{equation*}
where $\mathbf{a}$ is as in (\ref{limitdenom}) and  the covariance matrix $\Sigma$ is defined in the proof given below. \end{theorem}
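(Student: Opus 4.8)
The plan is to reduce $\tilde{S}'_B=\mathbf{d}\tilde{\mathbf{C}}$ to a fixed linear functional of a normalized sum of i.i.d.\ vectors, apply a multivariate central limit theorem, and then replace the random weight vector $\mathbf{d}$ by its almost-sure limit $\mathbf{a}$ via Slutsky's theorem. First I would invoke the first-projection expansion \eqref{kappahathdecom} simultaneously for every pair $(i,j)$, so that
\[
\sqrt{T}\big(\tilde{\kappa}^{(ij)}-\kappa^{(ij)}\big)=T^{-1/2}\sum_{m=1}^{T}H_1^{(ij)}(x_{im},x_{jm})+R_T^{(ij)},\qquad R_T^{(ij)}\xrightarrow{P}0.
\]
Stacking these over the ${R\choose 2}$ pairs, I would define the random vectors $\mathbf{H}(m):=\big(H_1^{(ij)}(x_{im},x_{jm}),\,1\le i<j\le R\big)^{\top}$, giving
\[
\tilde{\mathbf{C}}=T^{-1/2}\sum_{m=1}^{T}\mathbf{H}(m)+\mathbf{R}_T,\qquad \mathbf{R}_T\xrightarrow{P}\mathbf{0},
\]
where the vector convergence of $\mathbf{R}_T$ is entrywise and hence joint, since there are only finitely many coordinates.

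Next I would identify $\Sigma$ as the covariance matrix of $\mathbf{H}(1)$, with entries $\Sigma_{(ij),(kl)}=\BCov\big(H_1^{(ij)}(x_{i1},x_{j1}),\,H_1^{(kl)}(x_{k1},x_{l1})\big)$. Each coordinate $H_1^{(ij)}$ has mean zero, being the first projection of a $U$-statistic centered at its mean $\kappa^{(ij)}$, and the hypothesis $\mathbb{V}(H_1^{(ij)})<\infty$ together with the Cauchy--Schwarz inequality guarantees that every such covariance is finite, so $\Sigma$ is a well-defined finite matrix. Because the observation vectors $\{(x_{im},1\le i\le R)\}$ are i.i.d.\ across $m$, the vectors $\mathbf{H}(m)$ are i.i.d.\ across $m$ with mean $\mathbf{0}$ and covariance $\Sigma$; the multivariate CLT then yields $T^{-1/2}\sum_{m=1}^{T}\mathbf{H}(m)\xrightarrow{D}N(\mathbf{0},\Sigma)$, and Slutsky's theorem with $\mathbf{R}_T\xrightarrow{P}\mathbf{0}$ gives $\tilde{\mathbf{C}}\xrightarrow{D}N(\mathbf{0},\Sigma)$.

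Finally I would remove the randomness in $\mathbf{d}$ by writing $\mathbf{d}\tilde{\mathbf{C}}=\mathbf{a}\tilde{\mathbf{C}}+(\mathbf{d}-\mathbf{a})\tilde{\mathbf{C}}$. By the continuous mapping theorem (equivalently, Cram\'er--Wold), the fixed linear functional gives $\mathbf{a}\tilde{\mathbf{C}}\xrightarrow{D}N(0,\mathbf{a}^{\top}\Sigma\mathbf{a})$. For the error term, $\mathbf{d}-\mathbf{a}\xrightarrow{a.s.}\mathbf{0}$ by \eqref{limitdenom}, while $\tilde{\mathbf{C}}=O_P(1)$ since it converges in distribution; hence $(\mathbf{d}-\mathbf{a})\tilde{\mathbf{C}}\xrightarrow{P}0$. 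A last application of Slutsky's theorem then delivers $\tilde{S}'_B=\mathbf{d}\tilde{\mathbf{C}}\xrightarrow{D}N(0,\mathbf{a}^{\top}\Sigma\mathbf{a})$.

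The main obstacle is the passage from the coordinatewise expansions to the joint distributional statement: one must check that the remainder vector $\mathbf{R}_T$ vanishes jointly (immediate here, as the index set is finite) and, more substantively, that the stacked first-projection vectors admit a genuine multivariate CLT with finite limiting covariance $\Sigma$ even when the pairs $(i,j)$ and $(k,l)$ share a region. This is precisely where the finite-variance hypothesis and the Cauchy--Schwarz bound are essential, since overlapping regions induce nonzero---but finite---cross-covariances among the coordinates of $\mathbf{H}(1)$, and these populate the off-diagonal blocks of $\Sigma$.
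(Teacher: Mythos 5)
Your proposal is correct and follows essentially the same route as the paper's proof: the stacked first-projection expansion of $\sqrt{T}(\tilde{\kappa}^{(ij)}-\kappa^{(ij)})$, the multivariate CLT for the i.i.d.\ vectors of first projections, and Slutsky's theorem to replace $\mathbf{d}$ by its almost-sure limit $\mathbf{a}$. The only difference is a normalization convention (the paper inserts a factor $\tfrac{1}{2}$ in the projection and correspondingly $\tfrac{1}{4}$ in $\Sigma$, while you absorb it into the definition of $\Sigma$), and your explicit treatment of the $(\mathbf{d}-\mathbf{a})\tilde{\mathbf{C}}\xrightarrow{P}0$ step is in fact slightly more careful than the paper's.
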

\begin{proof}
	Note that 
	\begin{eqnarray}
		\tilde{\kappa}^{(ij)}&=&\tilde{\kappa}(X_i^{T},X_j^{T})\\
		&=&\binom{T}{2}^{-1}\sum_{1\leq m < n\leq T}\tilde{h}_{\hat{F}_i}(x_{im},x_{in})\tilde{h}_{\hat{F}_j}(x_{jm},x_{jn}).
	\end{eqnarray}
	
	For any pair of regions $(i,j)$, by arguments given in Equation \eqref{kappahathdecom}
	\begin{equation}\label{hdecom1}
		\sqrt{T}(\tilde{\kappa}^{(ij)}-{\kappa}^{(ij)})=\frac{1}{2}T^{-1/2}\sum_{m=1}^{T}{H_1}^{(ij)}\big(x_{im},x_{jm}\big)+{R}^{(ij)}_T,\ 
		\text{where}\ \ {R}^{(ij)}_T\xrightarrow{P} 0. 
	\end{equation}
	Hence by the multivariate central limit theorem, 
	\begin{equation}
		\tilde{\mathbf{C}}_{\binom{R}{2}\times 1}=\big(\sqrt{T}\big(\tilde{\kappa}^{(ij)}-{\kappa}^{(ij)}, 1\leq i < j \leq R\big) \xrightarrow{D} N\big(\mathbf{0},\Sigma_{\binom{R}{2} \times \binom{R}{2}}\big), \label{numsb}
	\end{equation}
	where, $\Sigma$ is the covariance matrix $\Sigma=((\sigma_{i_1j_1,i_2j_2}))_{1\leq i_1<j_1\leq R, 1\leq i_2<j_2\leq R}$ with 
		\begin{equation}
		\sigma_{i_1j_1,i_2j_2}=\frac{1}{4}\mathbb{COV}\big({H_1}^{(i_1j_1)}\big(x_{i_1m},x_{j_1m}\big),{H_1}^{(i_2j_2)}\big(x_{i_2m},x_{j_2m}\big)\big).\end{equation}
	For $i_1=i_2=i$ and $j_1=j_2=j$,
	\begin{equation}\label{eq:delta1}
		\sigma_{i_1j_1,i_2j_2}=\frac{1}{4} \mathbb{V}\big({H_1}^{(ij)}\big(x_{im},x_{jm}\big)\big)={\delta_1}^{(ij)}.
	\end{equation}
	Therefore by Equations \eqref{numsb} and \eqref{limitdenom}, 
		$\tilde{S}^{'}_B= \mathbf{d}\tilde{\mathbf{C}} \xrightarrow{D} N(0,\mathbf{a^{\top}}\Sigma \mathbf{a})$,
	completing the proof.
\end{proof}

\subsection{Asymptotic distribution of \texorpdfstring{$S_B$}{} under spatial pairwise independence}\label{sec:sbasymnull}
Now, suppose that the regions are pairwise independent, that is, $\tilde{\kappa}^{(ij)}=0$ for all $1\leq i < j \leq R$. Then it is known that $\delta_1^{ij}$ given in (\ref{eq:delta1}) is $0$ for all $i\neq j$, and the limit distribution in the above result is degenerate at $0$. 
In this case we have the following asymptotic distribution result.
\begin{theorem}\label{theo:independence}
	Suppose that the vectors $(x_{im}, 1\leq i \leq R)$ are i.i.d.~for $1\leq m \leq T$. Further, for any $1\leq i\neq j\leq R$, $x_{im}$ and $x_{jm}$ are independent. 	Let $F_i$ be the marginal distribution of $x_{im}$. Suppose that $h_{F_i}$ is square integrable with the eigen decomposition (in the $L^2$ sense),
	\begin{equation}
		h_{F_i}(x, y)=\sum_{k=1}^{\infty}\lambda_k^{(i)}g_k^{(i)}(x)g_k^{(j)}(y). \label{SBhF}
	\end{equation}
	Then as $T\rightarrow\infty$,
	\begin{equation}
		T\tilde{S}_{B}\xrightarrow D \frac{1}{S_0}\sum_{1\leq i <j\leq R}\big[\dfrac{(w_{ij}+w_{ji})
		\displaystyle{\sum_{k,l=0}^{\infty}\lambda_k^{(i)}\lambda_l^{(j)}(Z_{ik,jl}^{2}-1)}}{\sqrt{\displaystyle{\sum_{k,l=0}^{\infty}
				{\lambda_k^{(i)}}^{2}{\lambda_l^{(j)}}^{2}}}}\big],
	\end{equation}
	where $\{Z_{ik,jl}\}$, are i.i.d.~standard normal variables, and $S_0=R$ if $W$ is row-standardized.
\end{theorem}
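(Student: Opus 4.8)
The plan is to reduce the whole statement to the limit theory of \emph{degenerate} second-order $U$-statistics, first pairwise and then jointly across the region-pairs, and to read off the chaos limit from the eigen-expansion \eqref{SBhF}. The entry point is the weak $U$-statistic expansion of $\tilde{\kappa}$ that is valid under independence ($\kappa^{(ij)}=0$): there the first projection vanishes, so the leading term is a degenerate $U$-statistic with the \emph{product} kernel, and for each pair $(i,j)$,
\[
T\tilde{\kappa}^{(ij)} = T{T\choose 2}^{-1}\sum_{1\le m<n\le T}h_{F_i}(x_{im},x_{in})\,h_{F_j}(x_{jm},x_{jn}) + TR_T^{(ij)},\qquad TR_T^{(ij)}\xrightarrow{P}0.
\]
Thus it suffices to analyse the leading term, a degenerate $U$-statistic based on $\Phi^{(ij)}\big((x_{im},x_{jm}),(x_{in},x_{jn})\big)=h_{F_i}(x_{im},x_{in})h_{F_j}(x_{jm},x_{jn})$.

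Next I would record that $h_{F_i}$ is itself a degenerate kernel: a direct computation from \eqref{eq:g_F}--\eqref{eq:hdef} gives $\BE[h_{F_i}(x,X)]=0$ for every $x$, so in \eqref{SBhF} each eigenfunction $g_k^{(i)}$ with $\lambda_k^{(i)}\neq0$ has mean zero and the family $\{g_k^{(i)}\}$ is orthonormal in $L^2(F_i)$. Substituting \eqref{SBhF} into $\Phi^{(ij)}$ and using the independence of $x_{im}$ and $x_{jm}$ shows that the product functions $\phi^{(ij)}_{kl}(x,y):=g_k^{(i)}(x)g_l^{(j)}(y)$ are orthonormal, mean zero, and are precisely the eigenfunctions of $\Phi^{(ij)}$ with eigenvalues $\lambda_k^{(i)}\lambda_l^{(j)}$; in particular $\Phi^{(ij)}$ is degenerate. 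The spectral (Gaussian-chaos) limit theorem for degenerate $U$-statistics then yields the marginal statement
\[
T\tilde{\kappa}^{(ij)}\xrightarrow{D}\sum_{k,l}\lambda_k^{(i)}\lambda_l^{(j)}\big(Z_{ik,jl}^2-1\big),
\]
where $Z_{ik,jl}$ is the Gaussian limit of $T^{-1/2}\sum_m g_k^{(i)}(x_{im})g_l^{(j)}(x_{jm})$.

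The crucial step, which I expect to be the main obstacle, is upgrading these marginal limits to \emph{joint} convergence of the vector $\big(T\tilde{\kappa}^{(ij)}\big)_{i<j}$, together with the assertion that all the $Z_{ik,jl}$ — ranging over pairs \emph{and} eigen-indices — are mutually independent. I would establish this by applying the multivariate CLT to the vector of normalized sums $T^{-1/2}\sum_m g_k^{(i)}(x_{im})g_l^{(j)}(x_{jm})$ and computing the limiting covariances: if two pairs are disjoint, or share exactly one region, then at least one non-shared region contributes a mean-zero factor and the covariance vanishes, while within a single pair orthonormality gives the Kronecker-delta covariance. Hence the limiting Gaussian vector has identity covariance, and since uncorrelated jointly Gaussian variables are independent, the claimed i.i.d.\ structure of $\{Z_{ik,jl}\}$ follows; this simultaneously delivers the joint convergence. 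A standard truncation argument (working with finitely many eigenvalues and controlling the tail via $\sum_{k}(\lambda_k^{(i)})^2<\infty$, guaranteed by square-integrability of $h_{F_i}$) then passes from the finite-dimensional chaos limit to the full series.

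Finally I would handle the normalizing denominators and assemble the result. By \eqref{denm}, $\tilde{\kappa}(X_i^T,X_i^T)\xrightarrow{a.s.}\sum_k(\lambda_k^{(i)})^2$, so the quantity under the square root in \eqref{bij} converges almost surely to the deterministic constant $\sum_{k,l}(\lambda_k^{(i)})^2(\lambda_l^{(j)})^2$. Slutsky's theorem converts the joint convergence of the numerators into joint convergence of $\big(T\tilde{\rho}^{(ij)}\big)_{i<j}$ to the stated ratios, absorbing the negligible remainders $TR_T^{(ij)}\xrightarrow{P}0$ along the way, and the continuous mapping theorem applied to the fixed linear combination $\tfrac{1}{S_0}\sum_{i<j}(w_{ij}+w_{ji})(\,\cdot\,)$ produces the displayed limit of $T\tilde{S}_B$, with $S_0=R$ under row-standardization. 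The only remaining care is routine bookkeeping in the truncation and the interchange of the infinite sums with the distributional limit.
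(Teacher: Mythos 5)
Your proposal is correct and follows essentially the same route as the paper: under pairwise independence the first projection vanishes, so one passes to the degenerate second-order $U$-statistic with product kernel $h_{F_i}h_{F_j}$, applies the spectral (weighted chi-square) limit theorem via the eigen-expansion \eqref{SBhF}, argues joint convergence and mutual independence of the $Z_{ik,jl}$ from orthonormality of the eigenfunctions together with pairwise independence across regions, and finishes with the a.s.\ convergence of the denominators \eqref{denm} plus Slutsky. If anything, you supply slightly more detail than the paper on the cross-pair covariance computation and the truncation of the infinite eigen-sums, which the paper only sketches.
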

\begin{proof}
Recall from \eqref{sbdef} that,
	\begin{equation}
		\tilde{S}_{B} = \frac{1}{S_0}\sum_{1\leq i < j \leq R}(w_{ij}+w_{ji})\frac{\tilde{\kappa}(X_i^{T},X_j^{T})}{\sqrt{\tilde{\kappa}(X_i^{T},X_i^{T})\tilde{\kappa}(X_j^{T},X_j^{T})}}.	 \label{nullproof}
	\end{equation}
	Further we have
	\begin{equation}
		\tilde{\kappa}^{(ij)}=\tilde{\kappa}(X_i^{T},X_j^{T})=\binom{T}{2}^{-1}\sum_{1\leq m<n \leq T}\tilde{h}_{\hat{F}_i}(x_{im},x_{in})\tilde{h}_{\hat{F}_j}(x_{jm},x_{jn}). \label{kappaij}
	\end{equation}
As mentioned earlier, due to independence, the first projections $H_1^{(ij)}$ in \eqref{hdecom1} are zero. We now use the second projections from Equation \eqref{hdecom2}. 
	We can write,
	\begin{equation*}
		\tilde{\kappa}^{(ij)}=\binom{T}{2}^{-1}\sum_{1\leq m<n \leq T}{h}_{{F_i}}(x_{im},x_{in}){h}_{{F_j}}(x_{jm},x_{jn})+{R}^{(ij)}_T,
		\ \text{where}\ \ T{R}^{(ij)}_T \xrightarrow{P} 0.
	\label{H2decom}
	\end{equation*}
	
	Now we can follow the proof of the general theorem on the asymptotic distribution of a $U$-statistic with a general degenerate kernel. The difference is that now we have several degenerate $U$-statistics. Each of them converges by the limit theorem for degenerate $U$-statistics. See for example \cite{bose2018u}. We need to ensure the joint convergence of these statistics, and argue the independence of the 
$\{Z_{ik,jl}\}$ across $i, j$. We outline the arguments below. 

Using \eqref{H2decom}, we can write the following approximate equation for $\tilde{\kappa}$ calculated for any pair of regions, i.e., $\tilde{\kappa}^{(ij)}$ as follows:
	\begin{eqnarray*}
		T\tilde{\kappa}^{(ij)}
		&\simeq&\frac{2}{T}\big[\sum_{1\leq m<n \leq T}{h}_{{F_i}}(x_{im},x_{in}){h}_{{F_j}}(x_{jm},x_{jn})\big]\\
		&=&\frac{2}{T}\big[\sum_{1\leq m<n \leq T}\big[\sum_{k=0}^{\infty}\lambda_k^{(i)}g_k^{(i)}(x_{im})g_k^{(i)}(x_{in})\big]
		\big[\sum_{k=0}^{\infty}\lambda_k^{(j)}g_k^{(j)}(x_{jm})g_k^{(j)}(x_{jn})\big]\big]\\
		&=&\frac{1}{T}\sum_{m,n=1}^{T}\big[\sum_{k=0}^{\infty}{\lambda_{k}}^{(i)}g_k^{(i)}(x_{im})g_k^{(i)}(x_{in})\big]\big[\sum_{k=0}^{\infty}{\lambda_{k}}^{(j)}g_k^{(j)}(x_{jm})g_k^{(j)}(x_{jn})\big]\\ && \  -\frac{1}{T}\sum_{m=1}^{T}\big[\sum_{k=0}^{\infty}{\lambda_{k}}^{(i)}g_k^{(i)}(x_{im})g_k^{(i)}(x_{im})\big]\big[\sum_{k=0}^{\infty}{\lambda_{k}}^{(j)}g_k^{(j)}(x_{jm})g_k^{(j)}(x_{jm})\big]\\
		&=& T_1 - T_2 \ \ (say).
	\end{eqnarray*}
Note that $\{g_{k}^{(i)}(\cdot)\}$ and $\{g_{l}^{(j)}(\cdot)\}$ are orthonormal functions. We explain in brief the convergence of the two terms. The second term equals
\begin{eqnarray*}
	T_2&=&\frac{1}{T}\sum_{m=1}^{T}\big[\sum_{k=0}^{\infty}{\lambda_{k}}^{(i)}g_k^{(i)}(x_{im})g_k^{(i)}(x_{im})\big]\big[\sum_{l=0}^{\infty}{\lambda_{l}}^{(j)}g_l^{(j)}(x_{jm})g_l^{(j)}(x_{jm})\big]\\
	& \xrightarrow{a.s.} & \big(\sum_{k=0}^{\infty}{\lambda_{k}}^{(i)}\big) \big(\sum_{l=0}^{\infty}{\lambda_{l}}^{(i)}\big).
\end{eqnarray*}
The first term equals, 
\begin{eqnarray*}
	T_1&=&\frac{1}{T}\sum_{m,n=1}^{T}\big[\sum_{k=0}^{\infty}{\lambda_{k}}^{(i)}g_k^{(i)}(x_{im})g_k^{(i)}(x_{in})\big]\big[\sum_{l=0}^{\infty}{\lambda_{l}}^{(j)}g_l^{(j)}(x_{jm})g_l^{(j)}(x_{jn})\big]\\
	&=&	\sum_{k,l=1}^{\infty}{\lambda_{k}}^{(i)}{\lambda_{l}}^{(j)}\big[\frac{1}{T}\sum_{m,n=1}^{T}g_k^{(i)}(x_{im})g_k^{(i)}(x_{in})g_l^{(j)}(x_{jm})g_l^{(j)}(x_{jn})\big]\\
	&=&	\sum_{k,l=1}^{\infty}{\lambda_{k}}^{(i)}{\lambda_{l}}^{(j)}\big[(\frac{1}{\sqrt{T}}\sum_{m=1}^{T}g_k^{(i)}(x_{im})g_l^{(j)}(x_{jm}))(\frac{1}{\sqrt{T}}\sum_{n=1}^{T}g_k^{(i)}(x_{in})g_l^{(j)}(x_{jn}))\big]\\
	&\xrightarrow{D}& \sum_{k,l=0}^{\infty}{\lambda_{k}}^{(i)}{\lambda_{l}}^{(j)}Z_{ik,jl}^{2}.
\end{eqnarray*}
Therefore we have, for every pair $i < j$,  	
$$T\tilde{\kappa}^{(ij)} \xrightarrow{D} \sum_{k,l=0}^{\infty}\lambda_k^{(i)}\lambda_l^{(j)}(Z_{ik,jl}^{2}-1),$$
where $Z_{ik,jl}, 1\leq k, l < \infty$ are independent standard normal variables. 

Moreover, when we consider the joint convergence of $T\tilde{\kappa}^{(ij)}, 1\leq i < j\leq R$, due to the pairwise independence $X_i^{T}$ and $X_j^{T}$, and the orthonormality of $\{g_{k}^{(i)}(\cdot)\}$ and $\{g_{l}^{(j)}(\cdot)\}$, the variables $Z_{ik,jl}$ are all  independent of each other.  Incidentally, only pairwise independence is being used here.
That is, 
	\begin{equation}
		\big(T{\kappa}^{(ij)},1\leq i < j \leq R\big) \xrightarrow D \big(\sum_{k,l=0}^{\infty}\lambda_k^{(i)}\lambda_l^{(j)}(Z_{ik,jl}^{2}-1),1\leq i < j \leq R\big). \label{numvec}
	\end{equation}

	Now, in the denominator of \eqref{nullproof} we have terms of the form $\tilde{\kappa}(X_i^{T},X_i^{T})$ that converge to constants given in Equation \eqref{denm}.
	
	From Equations \eqref{numvec} and \eqref{denm} we obtain that,
	\begin{equation}
		T\tilde{S}_{B}\xrightarrow D \frac{1}{S_0}\sum_{1\leq i < j\leq R}\big[\frac{(w_{ij}+w_{ji})
		\displaystyle{\sum_{k,l=0}^{\infty}\lambda_k^{(i)}\lambda_l^{(j)}(Z_{ik,jl}^{2}-1)}}{\sqrt{\displaystyle{\sum_{k=0}^{\infty}
				\big(\lambda_k^{(i)}\big)^{2}}\displaystyle{\sum_{l=0}^{\infty}\big(\lambda_l^{(j)}\big)^{2}}}}\big], \label{NULL}
	\end{equation}
	completing the proof.
\end{proof}
It may be noted that the limit distribution of $S_B$ given in Equation \eqref{NULL} depends on 
$\{F_i\}$ 
through the eigenvalues $\{\lambda_k^{(i)}\}$
given by Equation \eqref{SBhF}. 
For most distributions these cannot be computed in a closed form. However, they can be 
numerically approximated.  

Consider the special case where $\{X_{i1}, i=1,\ldots,R\}$ have a common distribution $F$, so that 
$F_i \equiv F$ for all $i$. Let $(\lambda_{k}^{(i)},g_{k}^{(i)})\equiv(\lambda_{k},g_k)$ for all $1\leq i\leq R$.
Then 
\begin{equation}
	T\tilde{S}_{B}\xrightarrow D \frac{1}{S_0\displaystyle{\sum_{k=0}^{\infty}
		\lambda_k^{2}}}\sum_{1\leq i < j \leq R}\big[{(w_{ij}+w_{ji})\sum_{k,l=0}^{\infty}\lambda_k\lambda_l(Z_{ik,jl}^{2}-1)}\big]. \label{NULL1}
\end{equation}


\section{Simulation study}\label{sec:simulation}

Next we explore various distributional aspects of the $\tilde{S}_B$ statistic through simulations. Our choice of $R$ and $W$ are motivated by the district wise COVID-19 data from the state of Kerala in India, that we are going to explore in the next section. 

Thus, we use $R=14$ as there are 14 districts of Kerala. We use three different spatial proximity matrices:
(i) the lag-1 adjacency matrix of the districts (i.e. regions) 
(ii) distance matrix of these districts, using the average of latitude and longitude of the district headquarters, and (iii) linear connectivity matrix, which is a lag-1 adjacency matrix with the regions arranged in a line. The third $W$ matrix is motivated by the almost linear geographical organization of the districts of Kerala from North-West to South-East.
We assume that we have observations over a reasonable length of time, and we choose $T=50$.

The $\tilde{S}_B$ statistic is studied by simulating it under various scenarios in absence and presence of spatial association. This requires obtaining samples so that we can apply formulae \eqref{sbp} and \eqref{SBp1}. Note that $S_0=R$ since we are working with row-standardized $W$ matrices. 

First we study the distributions under the null case of no dependence between the regions, and also validate the asymptotic approximations obtained in \eqref{NULL} and \eqref{NULL1}. Note that to use the finite-discrete approximation of the asymptotic distribution, we would require the eigenvalues of the kernels for the corresponding distributions.


Then we further study the $\tilde{S}_B$ statistic in presence of spatial dependence, using two well-known spatial processes, namely spatial autoregression and spatial moving average.

\subsection{Null Case of spatial independence}
In this simulation exercise we have used a common choice of the distribution $F$ for all $R$ spatial locations--six different choices for $F$, namely normal, uniform, exponential, Laplace, logistic, and chi-square were explored.  The eigenvalues $\{\lambda_k\}$ are readily available for these cases.  
Empirical null distributions were obtained using 10,000 replicates of the estimates of $\tilde{S}_B$, under spatial pairwise independence and these six choices of $F$.

From Figure \ref{fig:nullsims} we observe that under the null scenario of spatial independence, the finite sample distribution of the $\tilde{S}_B$ estimate does not depend in any significant way on 
the distributional assumptions. However there are visible effects of the spatial proximity matrices.

\begin{figure}[ht]
\centering
\includegraphics[width=1\linewidth]{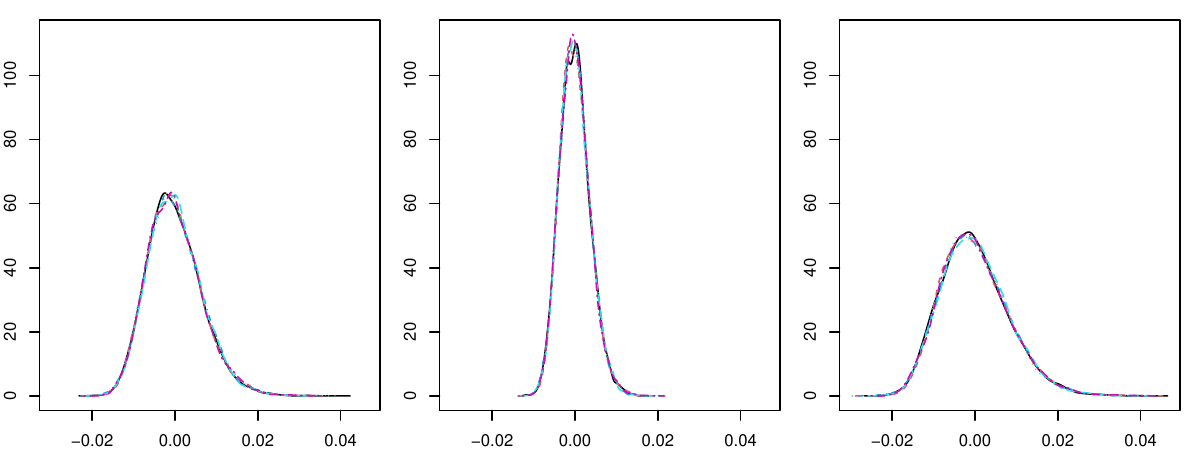}
\caption{Simulated null distribution of $T\tilde S_B$, $T=50$, $R=14$ for $10,000$ replicates, for six distributions. Proximity matrices $W$: lag-$1$ adjacency (left), 
inverse distance (middle), and linearly arranged spatial units (right). }
\label{fig:nullsims}
\end{figure}

Next we validate the 
asymptotic behaviour of $\tilde{S}_B$ 
stated in Theorem \ref{theo:independence}. As mentioned earlier this would require discrete approximations using eigenvalues. For the detailed procedure of this discrete approximation method, see \cite{kappara2022assessing}. We consider the finite sum based on 100 eigenvalues to approximate the asymptotic null distributions.

This theoretical asymptotic distribution is computed for the case where all regions have identical distribution 
(normal in this illustration). This is then compared with the simulated empirical distribution generated under the same assumption of normality for all regions.

For each choice of spatial proximity matrix, we have approximated the null distributions, as described in \eqref{NULL1}, based on $10000$ samples, see Figure \ref{fig:nullcomprsns}.

Figure \ref{fig:nullcomprsns} illustrates  that even with only $T=50$, and $100$ eigenvalues, the asymptotic distribution provides a good approximation to the (empirical) null distribution. Also, as before, the underlying distributional assumption appears to be less critical than the choice of the spatial proximity matrix.
\begin{figure}[ht]
\centering
\includegraphics[width=1\linewidth]{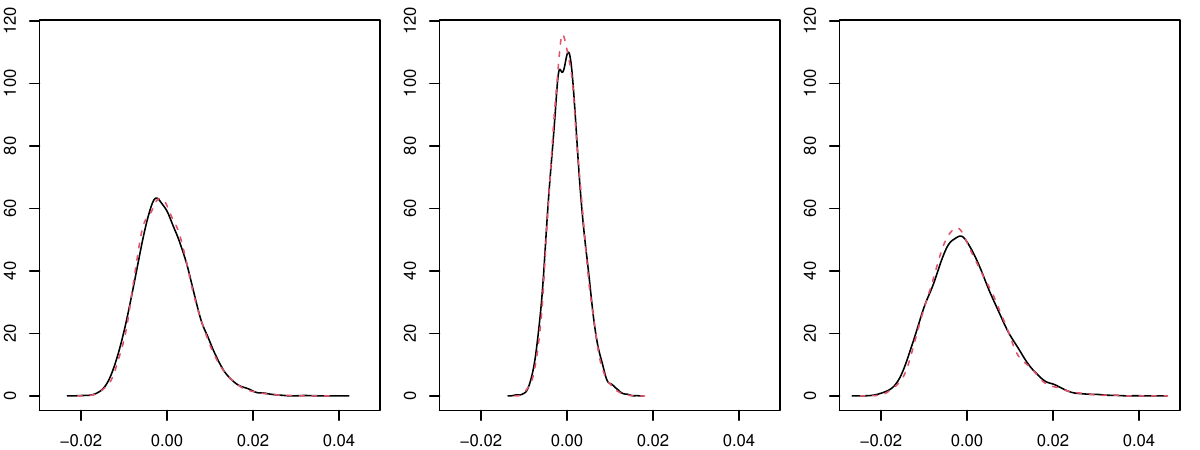}
\caption{Comparison of simulated null (black-solid line) and asymptotic null distributions (red-dashed-line) of $T\tilde S_B$ with $10,000$ replicates, based on finite discrete approximation using 100 eigenvalues for aymptotic distribution, with $W$ matrices: lag-$1$ adjacency (left), inverse distance (middle), and linearly arranged spatial units (right). }
\label{fig:nullcomprsns}
\end{figure}

\subsection{Spatial dependence case}
In order to illustrate the behaviour of the $\tilde{S}_B$ statistic under the presence of spatial dependence, we consider the following two general forms of spatial dependence models to simulate spatially dependent data. In simulations, we maintain the assumptions on the data given in Theorem \ref{thm:sbstatnormal}, that the observations are i.i.d.~across the time points.
\begin{enumerate}

\item Spatial moving average (SMA) dependence:
\begin{equation}
y=(I+\theta W)\epsilon. \label{SMA}
\end{equation}
\item Spatial autoregressive (SAR) dependence:
\begin{equation}
y=(I-\theta W)^{-1}\epsilon, \label{SAR}
\end{equation}
\end{enumerate}
where, $\theta$ is the spatial dependence parameter, $W$ (row standardized) corresponds to the assumed spatial proximity,   and $\epsilon$ is a vector of standard normal variates throughout the simulations. Under both the models,  $\theta=0$ refers to the null case of no spatial association. Note that \cite{anselin2012new} gives collection of some other spatial dependence models of higher orders.  

We consider a range of values for the spatial dependence parameter $\theta$, namely  0, 0.1, 0.25, 0.5, 0.75, and 0.9 in this illustration. We continue to use the same three choices of spatial proximity matrices described earlier, namely lag-1, inverse distance and linear connectivity. As before we have simulated spatially autocorrelated vectors of length $T=50$ at $R=14$ locations.  

For each choice of the $W$ matrix and the dependence parameter $\theta$, under the two spatial dependence models the $\tilde{S}_B$ statistic was simulated 10,000 times. In Figure \(\ref{fig:boxplots}\) we show the distribution of $\tilde{S}_B$ under SMA and SAR models.

\begin{figure}[ht]
\centering
\includegraphics[width=0.8\linewidth]{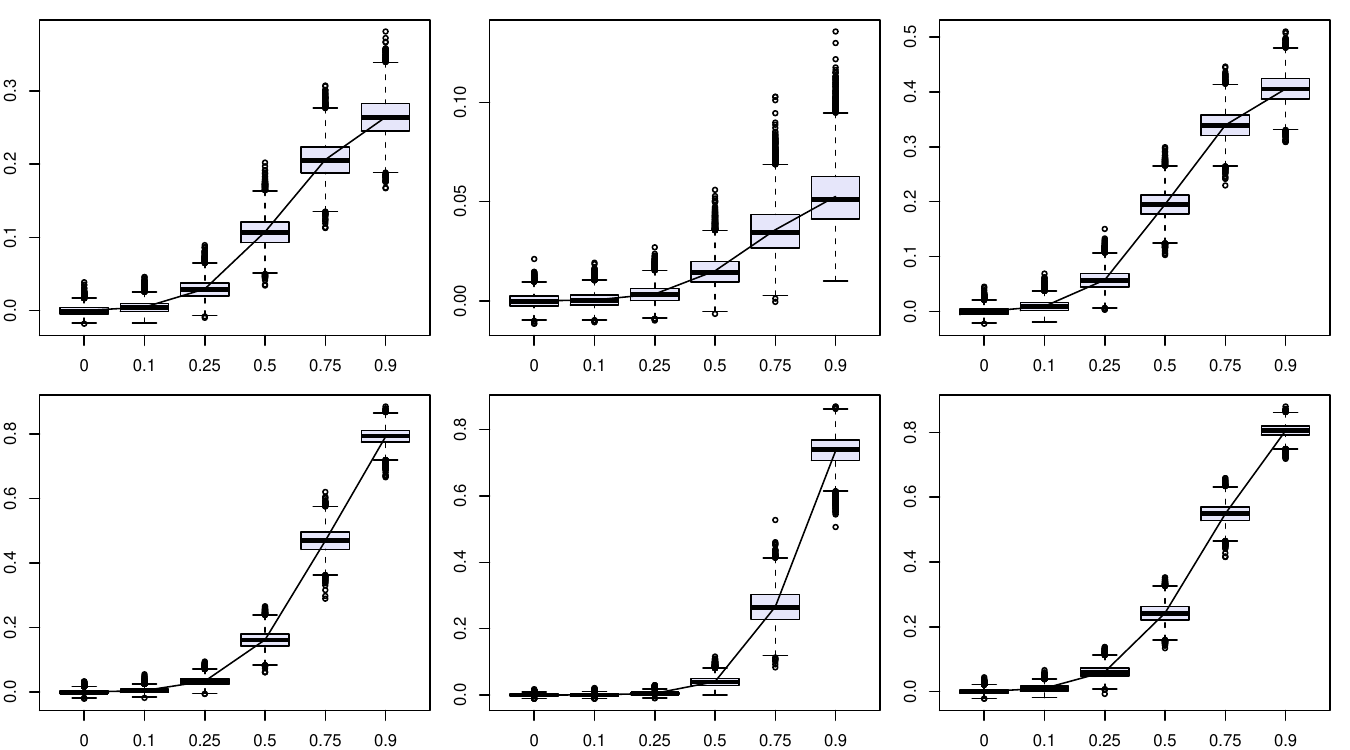}
\caption{Empirical distributions of $\tilde S_{B}^{'}$ (standardized version of $\tilde S_B$), $R$=14, $T$=50, $10,000$ replicates, with sample mean overlayed, dependence parameter $\theta$ on $x$-axis under SMA model (top), SAR model (bottom). Proximity matrices: lag-$1$ adjacency (left), inverse distance (middle), and linearly arranged spatial units (right).}
\label{fig:boxplots}
\end{figure}

The box plots in Figure \ref{fig:boxplots} show the distinct and monotone departure of the $\tilde{S}_B$ statistic from its null behaviour under the presence of spatial dependence. Incidentally, row-standardization of $W$ is crucial to yield this monotonicity.

It appears that as spatial association increases, the distribution seems to become more symmetric around the median. To confirm this, we study the skewness and kurtosis of this measure in Figure \ref{fig:skewkur}.
\begin{figure}[ht]
\centering
\includegraphics[width=0.5\linewidth]{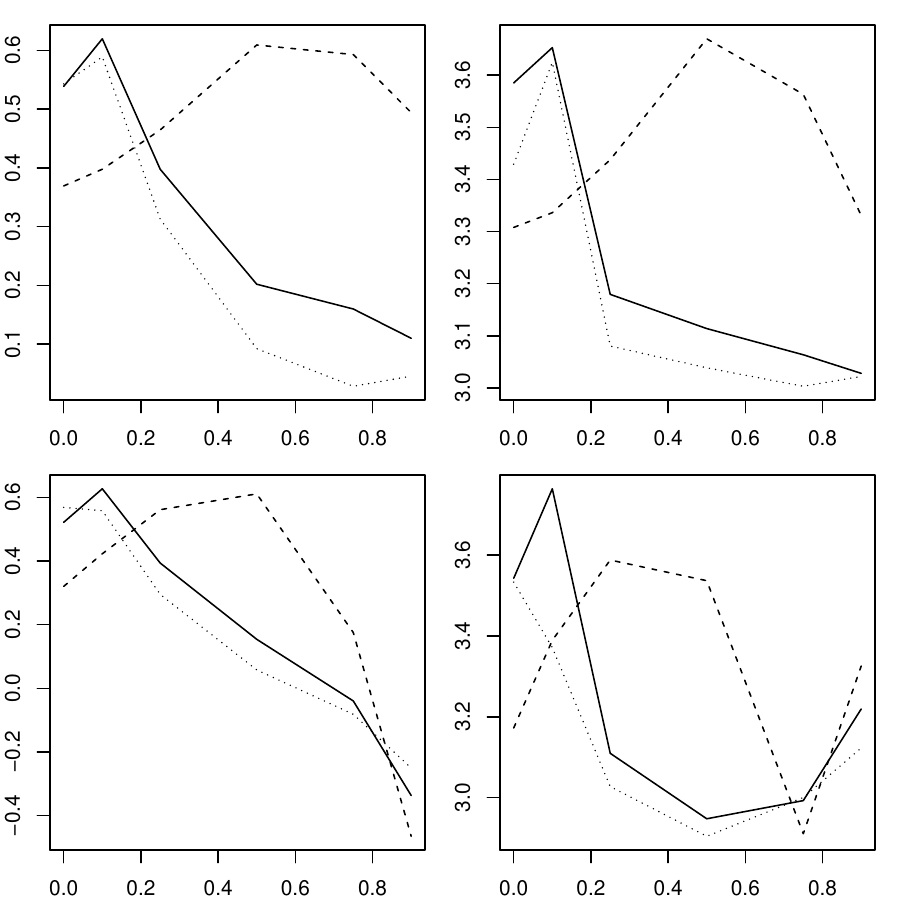}
\caption{Empirical skewness (left) and kurtosis (right) plotted against spatial dependence parameter $\theta$, under SMA model (top) and SAR model (bottom),  with proximity matrix: lag-$1$ adjacency matrix (black), inverse distance matrix (red), and linearly arranged spatial units (green) ($R$=14, $T$=50, $10,000$ replicates).}
\label{fig:skewkur}
\end{figure}
We see that for $\theta$ values close to $0$, i.e., relatively close to the null, the distribution of $\tilde{S}_B$ is moderately skewed. This observation is consistent with our result in Theorem \ref{theo:independence} that under the null, $\tilde{S}_B$ approximately has a distribution of a centered sum of weighted chi-squares.

As the spatial parameter $\theta$ increases, the skewness decreases gradually and kurtosis decreases and becomes close to $3$ under lag-1 adjacency and linear connectivity matrices (i.e. $W_1$ and $W_3$). This is consistent with the asymptotic normality result we have obtained in Theorem \ref{thm:sbstatnormal}.

Note that for lag-1 adjacency and the linearly arranged proximity matrices, the proportion of sparsity are $0.77$ and $0.87$. 
In comparison, the inverse distance $W$ matrix has a sparsity of $0.07$ only. Thus it is possible that the sparsity of the $W$ matrix also influences the distribution of $\tilde{S}_B$.

Overall we notice, from the Figures \ref{fig:boxplots} and \ref{fig:skewkur}, that the presence of spatial association highly influences the first four moments of the distribution of $\tilde{S}_B$.

\section{COVID-19 Application}\label{sb:covid} 

In this section we present an application of $\tilde{S}_B$ to the COVID-19 data from the Indian state of Kerala in the southern part of India. We consider district-level monthly COVID-19 data for the 14 districts of Kerala as of October 31, 2022. The data is obtained from a crowd-sourced database \url{https://data.incovid19.org}. We have the number of districts $R=14$ and the number of time points $T=29$ (months). 
Figure \ref{Districtts} presents the time series of this COVID-19 incidence data.

\begin{figure}[ht]
\centering
\includegraphics[width=0.9\linewidth]{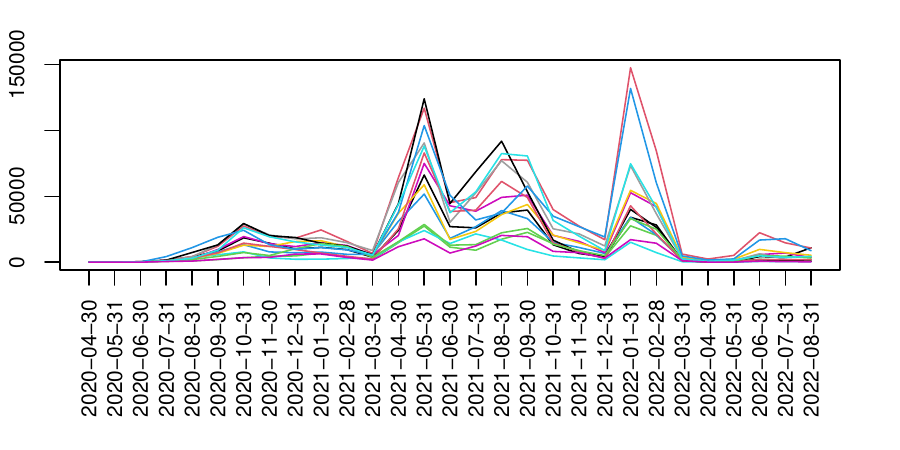}
\caption{District level COVID-19 time series data for Kerala }
\label{Districtts}
\end{figure}

We present the analyses here using the lag-1 adjacency matrix motivated by the works of \cite{Bhattacharjeeetal}.
In its current form, the $\tilde{S}_B$ statistics measures spatial association in absence of temporal pattern. However COVID-19 data naturally has temporal dependence. In order to meaningfully apply this measure, first  appropriate temporal models are applied to the data, and then the residuals are considered for assessing spatial dependence.



Apart from measuring spatial dependence, we would also assess the significance of such an $\tilde{S}_B$ measure. For that we use the asymptotic null distributions presented in Section \ref{sec:sbasymnull} and obtain an empirical p-value of the corresponding $\tilde{S}_B$ value. Further using bootstrap technique we also present confidence intervals.

From Figure \ref{Districtts} we observe a strong temporal pattern in the Covid-19 incidence.
However, it is also apparent that there is a (spatial) consistency in this temporal behaviour.

To capture these features, we explore three models, with the first one being only temporal, and other two involving spatio-temporal modelling. By nature, the first two models are endogenous (viz. temporal AR model and spatio-temporal autoregressive model), and the third one is exogenous (viz. a spatio-temporal-gravity model).

Let $x_{it}$=number of new cases for the $i$th district during the time period $t=1, \ldots, 29 (=T)$, $i=1, \ldots, 14 (=R)$. Then the models are defined as follows.\vskip5pt

\noindent \textbf{AR(3) Model.} Here we fit the AR(3) models to each of the $14$ district level time series separately and extract the residuals from each series. The model is defined as,
\begin{equation}
x_{it}=\beta^{(i)}_{0}+\beta^{(i)}_{1} x_{i(t-1)}+\beta^{(i)}_{2} x_{i(t-2)}+\beta^{(i)}_{3} x_{i(t-3)}+\epsilon_{it}.\label{AR}
\end{equation}
Here the superscript $(i)$ in the parameters $\beta$ represent parameters of the AR model fitted to the time series from the $i$-th district, $i=1, \ldots, 14(=R)$.\vskip5pt

\noindent \textbf{Spatio-temporal Model-1.} In this model we assume that for a given region $i$, the count for the $t$th month depend on the count for the $(t-1)$th month of that region along with all the $N_i$ neighbors of that region. That is, we regress each district's time series on its own past at lag-1 and the past of its  spatial neighbors. Thus the model involves both temporal and spatial covariates.

We further assume that the count response variable follows either a Poisson distribution, or a Negative Binomial distribution with mean $\mu_{it}$. Then the linear predictor for log-mean has the following form with the spatial and temporal autoregressive terms;
\begin{equation}
\log (\mu_{it})=\beta^{(i)}_{0}+\beta^{(i)}_{1} x_{i(t-1)}+\sum_{j \in N_i}\beta^{(i)}_{{2,j}}w_{ij}x_{j(t-1)},\hspace{0.5cm}t=1,\ldots,29,\hspace{0.5cm}j=1,\ldots,14. \label{SpAR}
\end{equation}
Again, the superscript $(i)$, for the $\beta$ parameters, indicates that the model is for the time series from the $i$th district, $i=1, \ldots, 14 (=R)$. Here too the models are fitted to the time-series from individual regions and residuals are extracted.
\vskip5pt

\noindent \textbf{Spatio-temporal Model-2.} The third model is an application of a specialized spatio-temporal model proposed in \cite{Bhattacharjeeetal}. All covariates in this model are exogenous. They include spatial-variates, temporal-variates, and covariates such as air passenger traffic data, that includes both spatial and temporal information. This model is fitted simultaneously to all $R$ time series due to shared parameters/covariates in the model.

The state of Kerala has four operational airports at Trivandrum, Calicut, Cochin and Kannur. The data on passenger traffic at these airports have been obtained from the Airport Authority of India (AAI) source \url{https://www.aai.aero/en/business-opportunities/aai}-\url{traffic-news}. We will denote the covariates for this model as follows: \vskip5pt

\noindent $d_{ik}$ is the (euclidean) distance between the $i$-th district center and the $k$-th  airport.\vskip5pt

\noindent $X_{1kt}$ is the number of passengers arriving at the $k$-th airport at $t$-th time point.\vskip5pt

\noindent $X_{2}$ and $X_{3}$ are two numerical variables containing geographical location of each district head quarter (i.e. longitude and latitude respectively).\vskip5pt

\noindent $X_4$ is a categorical variable for the $T$ time points.\vskip5pt

Once again we assume the response variable $x_{it}$ of COVID-19 incidence follows a negative binomial distribution, where the linear predictor for log-mean is given by,
\begin{equation}
\log\mu_{it}=\beta_0 + \sum_{k=1}^{4} \beta_{1,k} f(d_{ik})X_{1kt} +\beta_{2}X_{2}+\beta_{3}X_{3}+ \sum_{t=1}^{T} \beta_{4,t} I_{\{X_4=t\}} . \label{model}
\end{equation}
The first part of the model with parameters $\beta_{1,k}$ is adopted in the spirit of gravity models and explains the effect of the volume of air passenger traffic on the district level COVID-19 incidence. The (inverse) effect distance between airports and district headquarters is captured by a distance decay function. Based on the results from \cite{Bhattacharjeeetal}, we use two decay function, the inverse decay $f(d)=1/d$, and exponential decay $f(d)=\exp(-d)$. 
\begin{figure}[ht]
\centering
\includegraphics[width=0.7\linewidth]{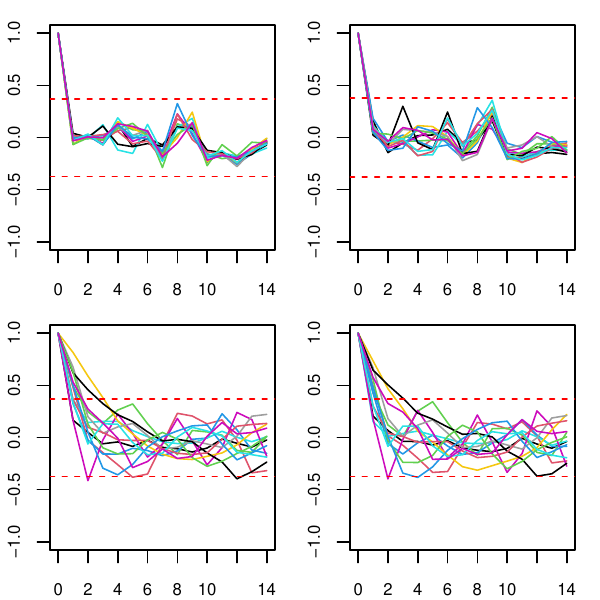}
\caption{Autocorrelation of residuals: Model \ref{AR} (top left), Model \ref{SpAR} (top right), Model \ref{model} with inverse distance decay (bottom left), and exponential decay (bottom right)  with $95\%$ threshold. }
\label{fig:acfplots}
\end{figure}

In Figure \ref{fig:acfplots} we present the autocorrelation functions from the 14 districts under these three models (with two distance decay functions for the third model).
From the residual autocorrelations plot from AR(3) model in Figure \ref{fig:acfplots} we observe that, in spite of fitting the models seperately for the 14 districts, the residuals behave similarly. This strengthens the idea of applying spatio-temporal models to this data.

From Figure \ref{fig:acfplots}), we also observe that the strong clustered pattern of the autocorrelation functions seem to dissipate with applications of the two spatio-temporal models. It appears to be most de-clustered for the spatio-temporal-model-2.

All four sets of autocorrelation plots in Figure \ref{fig:acfplots} are within the $95\%$ threshold for all $R$ time-series-residuals. Therefore for practical purposes we can assume an i.i.d.~structure within each of the residual series from the $R=14$ districts. Thus the statistic, $\tilde{S}_{B}$ can be computed based on these residuals, and we can apply our distributional results. 

The above observations are further confirmed by carrying out a test of significance for the $\tilde{S}_B$ statistics (see Table \ref{table:1}).
\begin{table}[ht]
\caption{\label{table:1} Results on $\tilde{S}_B$ from the fitted models for COVID-19 incidence data}
\vskip3pt
\begin{center}
\begin{tabular}{l|c|c|c}
\hline
\textbf{Model} & $\tilde{S}_B$ & \textbf{CI} & \textbf{p-value} \\ \hline
AR(3) Model                  & 0.85 & (0.7278, 0.9858) & 0.0127 \\
ST Model-1                   & 0.78 & (0.6163, 0.9126) & 0.0182 \\
ST Model-2 ($f(d)=1/d$)      & 0.09 & (0,      0.0972) & 0.3578 \\
ST Model-2 ($f(d)=\exp(-d)$) & 0.14 & (0,      0.1530) & 0.3103 \\ \hline
\end{tabular}
\end{center}
\end{table}

For the spatio-temporal-model-2  (with either distance function), we are unable to reject the null hypothesis of no spatial association.
Thus we can conclude that the residuals obtained from this model are spatially pairwise independent. Therefore Model (\ref{model}) satisfactorily explains the spatial and temporal incidence pattern of COVID-19 in the state Kerala.

Based on the overall conclusion on spatial association (lack thereof in the residuals after applying the spatio-temporal model with the exponential decay model \ref{model}), we further investigated the presence of individual pairwise dependence, if any. Accordingly in Figure \ref{fig:corplotst2} we present the pairwise Bergsma correlations (see \cite{bergsma2006new}) for the 14 districts based the residuals obtained from Model \ref{model}.
\begin{figure}[ht]
\includegraphics[width=0.7\linewidth]{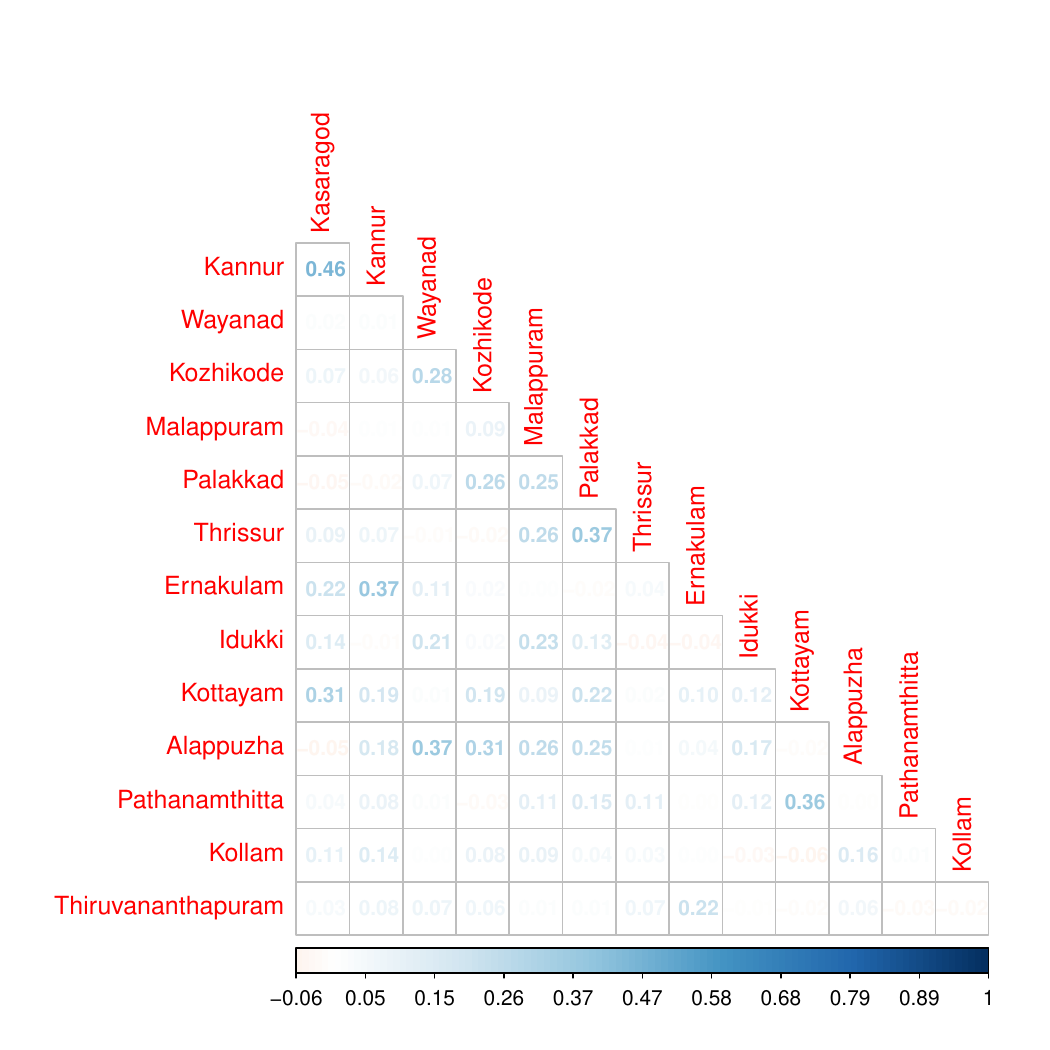}
\caption{Pairwise Bergsma correlations of the residuals from the spatio-temporal Model \eqref{model}} with exponential distance decay.
\label{fig:corplotst2}
\end{figure}
As was shown for the $\tilde{S}_B$ statistic, for the $\tilde{\rho}$ statistics also it is possible to assess significance (using an empirically derived cutoff of $\tilde{\rho}>0.17$). By applying such a test, we conclude pairwise independence of most pairs of districts. There appears to be a few sporadic significant pairs, which may be due to some yet unknown factor(s).

\section{Discussion}
We have used Bergsma's correlation and its estimate to define a global spatial measure of association. This measure is asymptotically normal when we have observations that are i.i.d.~over time but are spatial dependent. In the absence of spatial dependence, more precisely when there is only pairwise spatial independence, the estimate has an asymptotic distribution involving infinite sum of weighted i.i.d.~chi-square variables. In both cases, the distribution depends on the unknown underlying distribution of the observations, especially through the eigenvalues of appropriate kernel functions. Simulations show that for reasonably large number of observations, the actual finite sample distributions are well approximated by the asymptotic distributions. It is also seen that these measures and their distributions are sensitive to the choice of a spatial proximity matrix. 

At present no distributional properties are known for the association estimate when the observations are also temporally dependent. Inspite of this, the measure can be used for model fitting purposes even when such spatial dependence is present. This is done by first removing the temporal dependence through appropriate modelling and then using the measure on the residual series. We have presented such an application for spatio-temporal modelling of COVID-19 data on the monthly time series data for the 14 districts of the Indian state of Kerala.  

In the simulation work here, we have explored a somewhat smallish number of regions. As otherwise small signal (of dependence) may be lost in a large collection of otherwise independent locations. We have also simulated from a moderately large number of spatial locations $R=99$. The null behaviour remains the same, with predominant factor affecting the distribution being the proximity matrix $W$ (see Figures \ref{fig:nullsims5st} and \ref{fig:nullcomprsns5st}). The behaviour under the alternate scenario remains same under the lag-1 adjacency $W$ matrix under both SMA and SAR models (see Figure \ref{fig:boxplots5st}). However there's clear loss of sensitivity under the inverse distance $W$ matrix (see Appendix B for illustations).

For other potential extensions of the current work one may refer to \cite{jkss2023}.

\bibliographystyle{abbrvnat}

\section*{Appendix A: {\tt R} code}
\appendix
\renewcommand{\thesection}{\Alph{section}}


We give below two {\tt R} codes.
The first is for the $U$-statistic based estimate of Bergsma correlatio.
Input for this first code is bivariate data in the form of two vectors, $x$ and $y$.
The second is a code for computing the $\tilde{S}_B$ statistic. For this data is to be given in the format of a matrix with $T$ rows (time points) and $R$ columns (locations), and an $R \times R$ Spatial proximigty matrix.\vskip10pt

\noindent Bergsma's ${\tilde \rho}$

\begin{verbatim}
rho_tilde <- function(x,y)
{
n   = length(x)
X   = matrix( replicate(n, x), nrow = n); Dx = abs( X - t(X))
Y   = matrix( replicate(n, y), nrow = n); Dy = abs( Y - t(Y))
A1  = apply(Dx,1,mean)
A2  = apply(Dy,1,mean)
B1  = mean(A1)
B2  = mean(A2)
Dx1 = sweep(Dx,1,(n/(n-1))*A1)
Dx2 = sweep(Dx1,2,(n/(n-1))*A1)
Hx  = (-1/2)*(Dx2 +(n/(n-1))*B1)
Dy1 = sweep(Dy,1,(n/(n-1))*A2)
Dy2 = sweep(Dy1,2,(n/(n-1))*A2)
Hy  = (-1/2)*(Dy2 +(n/(n-1))*B2)
Hxy = Hx*Hy
I   = matrix(1,n,n)
I[lower.tri(I,diag=TRUE)] = 0
r_tilde = sum(I*Hxy)/(sqrt(sum(I*Hx*Hx)*sum(I*Hy*Hy))) #computing rho_tilde
return(r_tilde)
}
\end{verbatim}

\vspace{1cm}
\noindent $\tilde{S}_B$ statistic

\begin{verbatim}
SB <- function(data, W){
r_curl = matrix(0, ncol(data), ncol(data))
for(i in 1:ncol(data)){
for(j in i:ncol(data)){
r_curl[i,j] = rho_tilde(data[,i], data[,j])
r_curl[j,i] = r_curl[i,j]
}}
Spatial_AI = sum(W*r_curl)/sum(W)
return(Spatial_AI)
}
\end{verbatim}

\section*{Appendix B: Simulation with $R=99$ locations}

\begin{figure}[ht]
\centering
\includegraphics[width=1\linewidth]{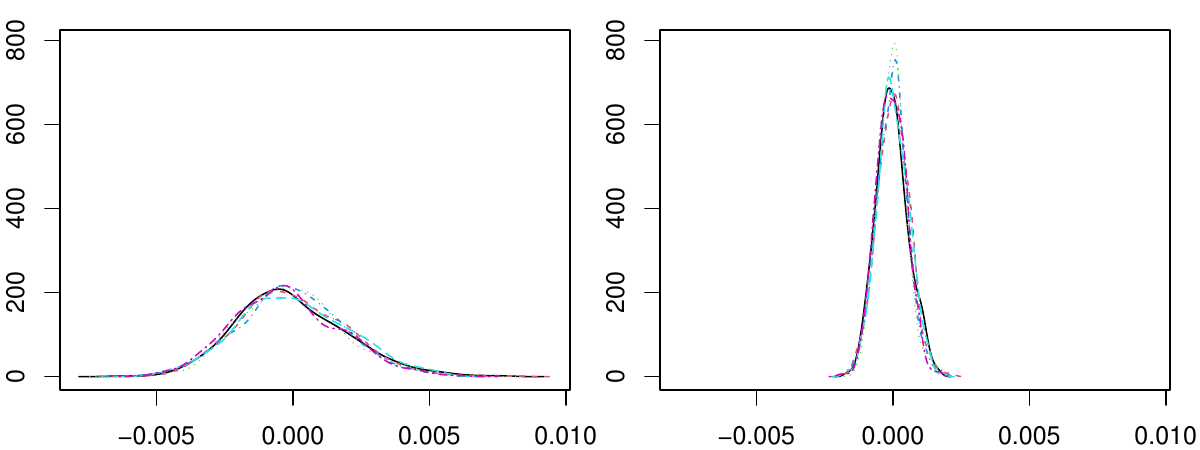}
\caption{Simulated null distribution, $T=50$, $R=99$ for $500$ replicates, for six distributions. Proximity matrices $W$: lag-$1$ adjacency (left), 
inverse distance (right). }
\label{fig:nullsims5st}
\end{figure}

\begin{figure}[ht]
\centering
\includegraphics[width=1\linewidth]{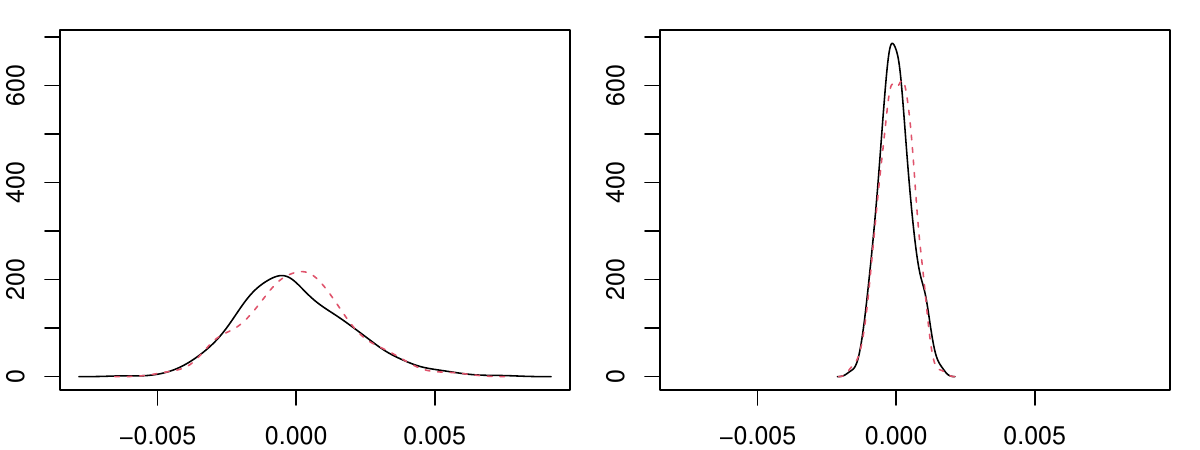}
\caption{Comparison of simulated null (black-solid line) and asymptotic null distributions (red-dashed-line) of $T*S_B$ with $500$ replicates, based on finite discrete approximation using 100 eigenvalues for aymptotic distribution, with $W$ matrices: lag-$1$ adjacency (left), inverse distance (right). }
\label{fig:nullcomprsns5st}
\end{figure}

\begin{figure}[ht]
\centering
\includegraphics[width=0.8\linewidth]{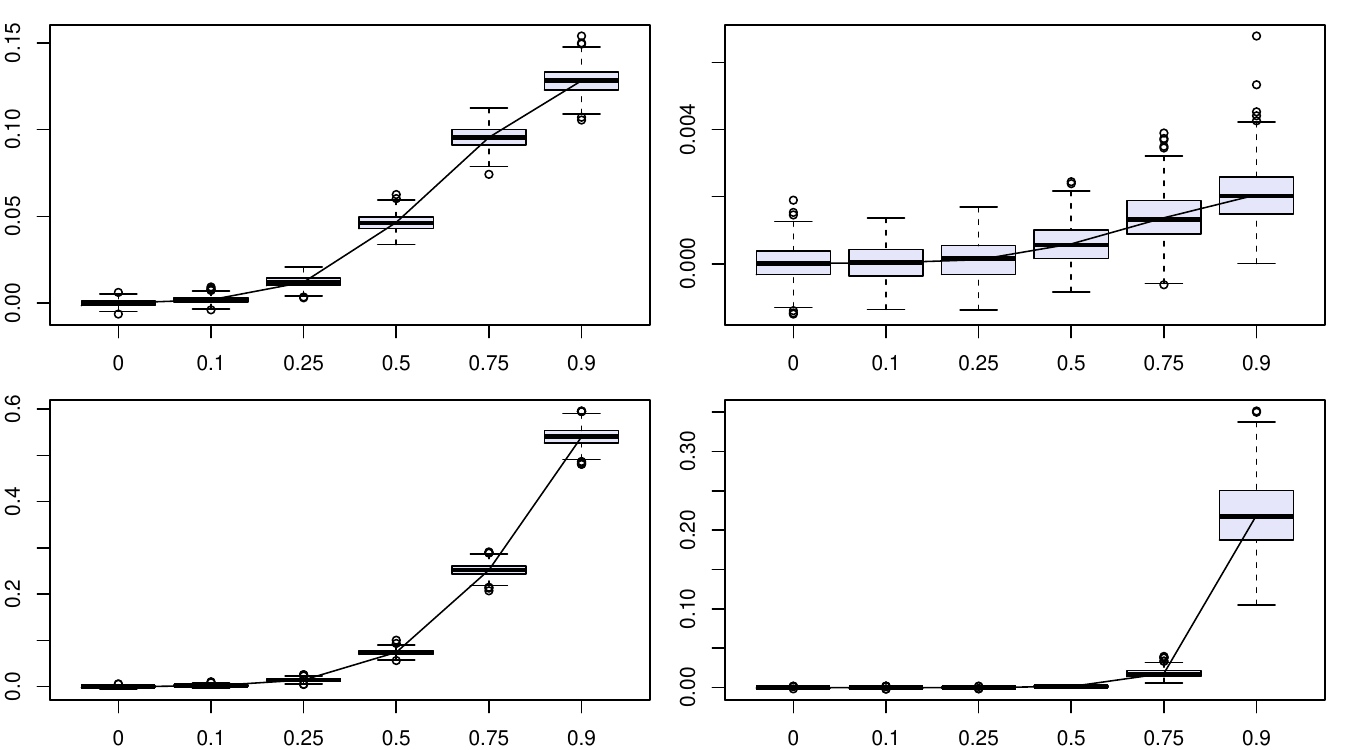}
\caption{Empirical distributions of $S_{B}$, $R$=99, $T$=50, $500$ replicates, with sample mean overlayed, dependence parameter $\theta$ on $x$-axis under SMA model (top), SAR model (bottom). Proximity matrices: lag-$1$ adjacency (left), and inverse distance (right).}
\label{fig:boxplots5st}
\end{figure}

\end{document}